\newtheorem{theorem}{Theorem}
\newtheorem{lemma}[theorem]{Lemma}
\newtheorem{corollary}[theorem]{Corollary}
\newtheorem{proposition}[theorem]{Proposition}
\newtheorem{assumption}{Assumption}
\pgfplotsset{compat=newest}
\let\NAT@parse\undefined
\newcommand{\ALGtikzmarkcolor}{black}
\newcommand{\ALGtikzmarkextraindent}{4pt}
\newcommand{\ALGtikzmarkverticaloffsetstart}{-.75ex}
\newcommand{\ALGtikzmarkverticaloffsetend}{-.5ex}
\newcounter{ALG@tikzmark@tempcnta}
\newcommand\ALG@tikzmark@start{%
    \global\let\ALG@tikzmark@last\ALG@tikzmark@starttext%
    \expandafter\edef\csname ALG@tikzmark@\theALG@nested\endcsname{\theALG@tikzmark@tempcnta}%
    \tikzmark{ALG@tikzmark@start@\csname ALG@tikzmark@\theALG@nested\endcsname}%
    \addtocounter{ALG@tikzmark@tempcnta}{1}%
}
\def\ALG@tikzmark@starttext{start}
\newcommand\ALG@tikzmark@end{%
    \ifx\ALG@tikzmark@last\ALG@tikzmark@starttext
    \else
        \tikzmark{ALG@tikzmark@end@\csname ALG@tikzmark@\theALG@nested\endcsname}%
        \tikz[overlay,remember picture] \draw[\ALGtikzmarkcolor] let \p{S}=($(pic cs:ALG@tikzmark@start@\csname ALG@tikzmark@\theALG@nested\endcsname)+(\ALGtikzmarkextraindent,\ALGtikzmarkverticaloffsetstart)$), \p{E}=($(pic cs:ALG@tikzmark@end@\csname ALG@tikzmark@\theALG@nested\endcsname)+(\ALGtikzmarkextraindent,\ALGtikzmarkverticaloffsetend)$) in (\x{S},\y{S})--(\x{S},\y{E});%
    \fi
    \gdef\ALG@tikzmark@last{end}%
}
\apptocmd{\ALG@beginblock}{\ALG@tikzmark@start}{}{\errmessage{failed to patch}}
\pretocmd{\ALG@endblock}{\ALG@tikzmark@end}{}{\errmessage{failed to patch}}
\newcommand{\col}[1]{{\rm col}(#1)}
\newcommand{\image}[1]{{\rm image}#1}
\newcommand{\leftker}[1]{{\rm leftker}#1}
\newcommand{\rank}[1]{{\rm rank}(#1)}
\newcommand{\R}{\mathbb{R}}
\newcommand{\Z}{\mathbb{Z}}
\newcommand{\N}{\mathbb{N}}
\newcommand{\CS}{{\rm io}\mathcal{CS}}
\renewcommand{\H}{\mathscr{H}}
\newcommand{\HH}{\mathcal{H}}
\newcommand{\rz}[1]{#1}
\newcommand{\rzz}[1]{#1}
\begin{document}


\title{\LARGE \bf
Data-Enabled Predictive Iterative Control
}
\author{Kai Zhang, Riccardo Zuliani, Efe C. Balta, and John Lygeros
\thanks{Research supported by the Swiss National Science Foundation under NCCR Automation (grant agreement 51NF40\_180545). K. Zhang, R. Zuliani and J. Lygeros are with the Automatic Control Laboratory (IfA), ETH Zurich, 8092 Zurich, Switzerland \texttt{\small$\{$zhangkai,rzuliani,lygeros$\}$@ethz.ch}. E. C. Balta is with Inspire AG, 8092 Zurich, Switzerland \texttt{\small efe.balta@inspire.ch}, also with IfA.}}
\maketitle
\thispagestyle{empty}
\pagestyle{empty}


\begin{abstract}
This work introduces the Data-Enabled Predictive iteRative Control (DeePRC) algorithm, a direct data-driven approach for iterative LTI systems. The DeePRC learns from previous iterations to improve its performance and achieves the optimal cost. By utilizing a tube-based variation of the DeePRC scheme, we propose a two-stage approach that enables safe active exploration using a left-kernel-based input disturbance design. This method generates informative trajectories to enrich the historical data, which extends the maximum achievable prediction horizon and leads to faster iteration convergence. In addition, we present an end-to-end formulation of the two-stage approach, integrating the disturbance design procedure into the planning phase. We showcase the effectiveness of the proposed algorithms on a numerical experiment.
\end{abstract}

\begin{IEEEkeywords}
Data-driven control, Iterative learning control, Model predictive control, Active exploration.
\end{IEEEkeywords}

\section{Introduction}
\emph{Direct data-driven control} enables decision-making directly from raw data, bypassing the need for a parametric model, which received significant attention in the recent literature.
The Fundamental Lemma~\cite{willems_note_2005} establishes that all finite-length trajectories of a linear time-invariant (LTI) system belong to the span of the Hankel matrix constructed using a suitably persistently exciting input/output trajectory, 
and has inspired numerous novel controller designs in the framework of behavioral systems theory. 
For example, \cite{de_persis_formulas_2020} focuses on robustly stabilizing data-driven feedback control design, while 
\cite{coulson_data-enabled_2019,berberich_data-driven_2021} develop a predictive control scheme in a receding-horizon manner. 
The celebrated data-enabled predictive control (DeePC) scheme of \cite{coulson_data-enabled_2019} enables novel applications and theories in the field of direct data-driven control. 
Later extensions of DeePC provide also closed-loop stability guarantees by enforcing the last segment of the predicted trajectory to match the target~\cite{berberich_data-driven_2021}. 

The DeePC scheme uses a Hankel matrix representation of the system dynamics, which is assumed to satisfy a \emph{persistency of excitation condition} (PE condition) \cite{willems_note_2005}.
The degree of excitation of the Hankel matrix limits the maximum prediction horizon of the controller;
hence, a lack of sufficiently rich historical input/output data may produce short prediction horizons, potentially leading to suboptimal performance.
Our approach is to use the measurement data obtained during system execution to improve the PE condition, thus allowing for longer prediction horizons.
This requires persistently exciting inputs which need careful design to ensure exploration without sacrificing performance.
We address this problem in the context of iterative systems by providing a novel scheme with active exploration and theoretical guarantees.

\rz{Iterative systems are those performing repeated tasks, where all individual tasks, called iterations, start from the same initial condition and share the same objective.}
This enables the possibility of iteratively updating the controller using data collected from each iteration.
In the context of iterative systems, \cite{rosolia_learning_2018} proposes a learning Model Predictive Control (LMPC) algorithm that constructs a safe set using past safe trajectories to be used as terminal constraint.
This strategy ensures asymptotic stability and a non-increasing cost over iterations, ultimately converging to the optimal cost under suitable assumptions. 
To improve robustness, \cite{rosolia_robust_2017} extends LMPC to systems with additive uncertainties by leveraging tube-MPC approaches~\cite{mayne_robust_2005}. In this case, robust safe sets are constructed using nominal trajectories satisfying tightened constraints to ensure the nominal cost is non-increasing over iterations.

In this paper, we present the Data-Enabled Predictive iteRative Control (DeePRC), a direct data-driven control approach for iterative LTI systems. 
The DeePRC uses an input/output convex safe set and a terminal cost function designed from previous trajectories and 
enjoys recursive feasibility, asymptotic stability, non-increasing iteration cost, and convergence to the infinite-horizon optimum, albeit asymptotically.
We develop an active exploration variant of DeePRC, which is able to increase its maximum prediction horizon in a sample-efficient way requiring \emph{only a single initial safe trajectory} through the design of an input disturbance term.
We further present an end-to-end formulation of the two-stage approach, integrating the disturbance design procedure into the planning phase.

\paragraph*{Notation}
$[a, b]$ is the set of integers $a$ to $b$. $\|x\|_p$ is the $p$-norm of $x$ and $\|x\|_A^2=x^{\top}Ax$. $\col{x_0,x_1,...,x_\ell}$ and $\col{A_0,A_1,...,A_\ell}$ denote vertically stacked vector $[x_0^{\top}\; x_1^{\top}\; \cdots \; x_\ell^{\top}]^{\top}$ and matrix $[A_0^{\top}\; A_1^{\top}\; \cdots \; A_\ell^{\top}]^{\top}$. For a matrix $A$, $A_{[:m,:]}$ denotes the submatrix of its first $m$ rows, while $A_{[m:,:]}$ represents the remaining rows and we use $A_{[-m:,:]}$ to denote its last $m$ rows.
$\image{(A)}$ and $\leftker{(A)}$ denote the image and left kernel of matrix $A$, respectively. 
For a generic length-$N$ sequence $\{y_k\}_{k=0}^{N-1}$, we write $y_{[a,b]} := \col{y_a, \ldots, y_b}$.

\section{Preliminaries} \label{sec:preliminaries}
\subsection{Problem Setting}\label{subsection:problem_setting}
Consider the unknown discrete-time \rz{LTI} system
\begin{equation}
\label{eq:ABCD}
    \begin{aligned}
        x_{t+1} & = Ax_{t} + Bu_{t}, \\
        y_{t} & = Cx_{t} + Du_{t}, 
    \end{aligned}
\end{equation}
where $x_{t} \in \mathbb{R}^{n}$ is the state, $u_{t} \in \mathbb{R}^{m}$ is the input, and $y_{t} \in \mathbb{R}^{p}$ is the output.
The state $x_t$ is not directly measurable, and needs to be reconstructed using input/output information. \rz{Denote by $\underline{\ell}\in \mathbb{N}_{> 0}$ the lag of the system, that is, the minimum integer such that the observability matrix $\mathscr{O}_{\underline{\ell}}(A,C) := \col{C,CA,...,CA^{\underline{\ell}-1}}$ \rzz{has rank $n$}}.
We assume that an upper bound $\ell$ on the lag $\underline{\ell}$ is available.

To represent the state using input/output data, we define the extended state following~\cite[Definition~3]{berberich_design_2021}:
\begin{equation}
    \xi_t := \col{u_{[t-\ell,t-1]} , y_{[t-\ell,t-1]}}
    \in \mathbb{R}^{n_{\xi}},
\label{eq:extend_state}
\end{equation}
where $n_{\xi} = (m+p)\ell$. Here, $\xi_t$ is the state of a non-minimal representation of \eqref{eq:ABCD} with dynamics
\begin{equation}
\label{eq:ABCD_tilde}
    \begin{aligned}
        \xi_{t+1} & = \tilde{A}\xi_{t} + \tilde{B}u_{t}, \\
        y_{t} & = \tilde{C}\xi_{t} + \tilde{D}u_{t} ,
    \end{aligned}
\end{equation}
which is generally not unique. We refer to \cite[Equation~6]{berberich_design_2021} for one possible form of~\eqref{eq:ABCD_tilde}. 

Suppose system \eqref{eq:ABCD} performs iterative tasks, where the task for each iteration is to steer the system from a starting \rzz{equilibrium} $\xi^{\rm S}$ to a target \rzz{equilibrium} $\xi^{\rm F}$.
We define $ u^j = \col{u^j_{-\ell},\ldots,u^j_{-1},u^j_0,\ldots}$ and $y^j = \col{y^j_{-\ell},\ldots,y^j_{-1},y^j_0,\ldots}$, where $u^j_t = u^{\rm S}$, $y^j_t = y^{\rm S}$, $\forall t \leq 0$ describe the initial state for the associated extended state sequence $\{\xi^j_t\}_{t=0}^{T^j}$, with $\xi^j_0 = \xi^{\rm S}$.

A trajectory is considered \textit{safe} if it converges to the target state $\xi^{\rm F}:=\operatorname{col}(u^{\rm F},\dots,u^{\rm F},y^{\rm F},\dots,y^{\rm F})$ and satisfies the convex input and output constraints
\begin{align}\label{eq:constraints}
u^j_t \in \mathcal{U},~ y^j_t \in \mathcal{Y},
\end{align}
for all $t \in \Z_{[0,T^j-1]}$, where $\mathcal{U}\subseteq\R^m$ and $\mathcal{Y}\subseteq\R^p$ are convex sets.
Following \cite{rosolia_learning_2018}, the control objective is to solve the following infinite horizon problem.
\begin{subequations}
\label{eq:ift_prob}
\begin{align}
    J_{\infty}^{*}(\xi^{\rm S}) = 
    \min_{u_0,u_1,\ldots} \quad & \sum_{k=0}^{\infty}{h(y_k,u_k)} \label{eq:ift_prob:cost} \\
    {\rm s.t.} \quad~\, & \xi_{k+1} = \Tilde{A} \xi_k + \Tilde{B} u_k, \ \forall k \geq 0, \label{eq:ift_prob:dynamics} \\ 
    & y_k = \Tilde{C} \xi_k + \Tilde{D} u_k, \ \forall k \geq 0, \label{eq:ift_prob:measurement} \\ 
    & \xi_0 = \xi^{\rm S}, \label{eq:ift_prob:ini_cond} \\ 
    & u_{k} \in \mathcal{U}, \ y_{k} \in \mathcal{Y}, \ \forall k \geq 0. \label{eq:ift_prob:constr}
\end{align}
\end{subequations}
The stage cost is $h(u_k,y_k) = \|u_k - u^{\rm F}\|_{R}^{2} + \|y_k - y^{\rm F}\|_{Q}^2$ with \rz{$Q,R \succ 0$}, \rzz{and we assume $(u^\mathrm{F},y^\mathrm{F})\in \operatorname{int} (\mathcal{U} \times \mathcal{Y})$}.
\begin{assumption}\label{ass:stand}
	System~\eqref{eq:ABCD} is observable and stabilizable.
\end{assumption}

Problem \eqref{eq:ift_prob} is not tractable due to the infinite horizon and the lack of knowledge of the system dynamics.
We develop the DeePRC algorithm as a tractable solution to \eqref{eq:ift_prob}. Our scheme uses a data-based non-parametric model for \eqref{eq:ift_prob:dynamics}-\eqref{eq:ift_prob:measurement} to perform receding horizon control.

\subsection{Review of DeePC}
Here, we provide a brief background on the Data Enabled Predictive Control (DeePC).
Let $T,L \in \mathbb{Z}_{\geq 0}$ with $T \geq L > \ell$. 
The Hankel matrix of depth $L$ for a signal $u = \col{u_0, \ldots,u_T}$ is defined as
\begin{equation*}
    \mathscr{H}_{L}(u) := 
    \begin{bmatrix}
        u_0 & u_1 & \cdots & u_{T-L} \\
        u_1 & u_2 & \cdots & u_{T-L+1} \\
        \vdots & \vdots & \ddots & \vdots \\
        u_{L-1} & u_{L} & \cdots & u_{T-1}
    \end{bmatrix}.
\end{equation*}
We use the identifiability condition presented in~\cite{markovsky_identifiability_2023} for theoretical guarantees, which is a slight generalization of the Fundamental Lemma~\cite{willems_note_2005}. 
\begin{lemma}[{\cite[Corollary~21]{markovsky_identifiability_2023}}] \label{lemma:identifiability_condition}
    Consider a length $T$ input/output trajectory $\{u^{\rm d}, y^{\rm d}\}$ of system \eqref{eq:ABCD}. The image of $\mathscr{H}_{L}(u^{\rm d},y^{\rm d}):= \col{\mathscr{H}_{L}(u^{\rm d}), \mathscr{H}_{L}(y^{\rm d})}$ is the span of all length $L$ trajectories of the system if and only if 
\begin{equation}
    \rank{\mathscr{H}_{L}(u^{\rm d},y^{\rm d})} = mL + n.
\label{eq:pe_condition}
\end{equation}
\end{lemma}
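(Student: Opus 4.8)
The plan is to reduce the whole statement to a single dimension count on the linear subspace of length-$L$ trajectories, after which both implications fall out for free. First I would let $\mathcal{T}_L \subseteq \R^{(m+p)L}$ denote the set of all length-$L$ input/output trajectories of \eqref{eq:ABCD}, i.e.\ those vectors $\col{u_{[0,L-1]},y_{[0,L-1]}}$ for which there exists an initial state $x_0 \in \R^n$ consistent with the dynamics. Because $\{u^{\rm d},y^{\rm d}\}$ is itself a trajectory, time-invariance makes every length-$L$ window along it a trajectory as well; these windows are exactly the columns of $\mathscr{H}_{L}(u^{\rm d},y^{\rm d})$, so $\image{(\mathscr{H}_{L}(u^{\rm d},y^{\rm d}))} \subseteq \mathcal{T}_L$. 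The phrase ``the image spans all length-$L$ trajectories'' is then precisely the statement $\image{(\mathscr{H}_{L}(u^{\rm d},y^{\rm d}))} = \mathcal{T}_L$.

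The key step is to establish $\dim \mathcal{T}_L = mL + n$. I would parametrize trajectories by the linear map $\Phi : \R^{n+mL} \to \R^{(m+p)L}$ sending $(x_0, u_{[0,L-1]})$ to $(u_{[0,L-1]}, y_{[0,L-1]})$, where the output block is given by the forced response $y_k = CA^k x_0 + \sum_{i=0}^{k-1} CA^{\,k-1-i} B u_i + D u_k$. By construction $\Phi$ is linear and its image is exactly $\mathcal{T}_L$, so it remains only to prove injectivity. Reading off the input block recovers $u_{[0,L-1]}$ verbatim; setting $u = 0$ then leaves $y_{[0,L-1]} = \mathscr{O}_{L}(A,C)\, x_0$, which vanishes only for $x_0 = 0$ since $L > \ell \geq \underline{\ell}$ together with observability guarantees that $\mathscr{O}_{L}(A,C)$ has full column rank $n$. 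Hence $\Phi$ is an isomorphism onto its image and $\dim \mathcal{T}_L = n + mL$.

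With these two facts the equivalence is immediate. Since $\image{(\mathscr{H}_{L}(u^{\rm d},y^{\rm d}))}$ is a subspace of $\mathcal{T}_L$ and $\dim \mathcal{T}_L = mL + n$, the inclusion is an equality if and only if the image has dimension $mL + n$, that is $\rank{\mathscr{H}_{L}(u^{\rm d},y^{\rm d})} = mL + n$: the direction $(\Leftarrow)$ uses that a subspace whose dimension matches that of the ambient trajectory space must coincide with it, and $(\Rightarrow)$ is the contrapositive observation that any proper subspace has strictly smaller dimension.

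I expect the main obstacle to be the dimension computation for $\mathcal{T}_L$, and specifically the injectivity of $\Phi$; this is the only place where the hypotheses enter in an essential way, namely that observability makes the lag $\underline{\ell}$ well defined and that $L > \ell$ forces $\mathscr{O}_{L}(A,C)$ to attain rank $n$. Once injectivity is secured, the remainder is elementary linear algebra and bookkeeping on the Hankel structure.
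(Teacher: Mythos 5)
Your proof is correct. Note that the paper does not actually prove this lemma: it is imported wholesale from Corollary~21 of the cited identifiability reference, so there is no in-paper argument to match, and what you have written is essentially the standard proof underlying that citation. Your structure is the right one: (i) the set $\mathcal{T}_L$ of length-$L$ trajectories is a linear subspace of $\mathbb{R}^{(m+p)L}$ of dimension exactly $mL+n$, obtained from injectivity of the parametrization $(x_0,u_{[0,L-1]})\mapsto(u_{[0,L-1]},y_{[0,L-1]})$ --- the input block is read off directly, and for zero input $y_{[0,L-1]}=\mathscr{O}_L(A,C)\,x_0$ with $\mathscr{O}_L(A,C)$ of full column rank because $L>\ell\geq\underline{\ell}$; and (ii) each Hankel column is a length-$L$ window of the data trajectory and hence itself a trajectory by time invariance, so the column span is a subspace of $\mathcal{T}_L$. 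The equivalence is then a dimension count in both directions. The one hypothesis worth flagging explicitly is observability (Assumption~1): it is what makes the lag $\underline{\ell}$ well defined and lets $\mathscr{O}_L(A,C)$ attain rank $n$; for an unobservable system one would have $\dim\mathcal{T}_L=mL+\operatorname{rank}\mathscr{O}_L(A,C)<mL+n$, the stated rank condition could never hold, and the equivalence would degenerate. Your argument already invokes this in exactly the right place, so no repair is needed.
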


\smallskip

\rz{Under condition \eqref{eq:pe_condition}, one could identify a state-space (using standard system identification techniques) or \rzz{kernel} representation (following \cite{alsalti2023sample}) of the system. The main advantage of a direct data-driven approach, like the one presented in this paper, is its simplicity and efficiency, as well as the possibility to directly deal with noisy data by introducing regularization terms in the cost function. For a detailed discussion about the differences between direct and indirect methods, we refer the reader to \cite{dorfler_bridging_2023}. A direct comparison between our approach and that of \cite{alsalti2023sample} is subject of future research.}
Given an initial horizon $T_{\rm ini} \in \N_{> 0}$ and a prediction horizon $N \in \N_{> 0}$, let
\begin{equation}
    \begin{bmatrix}U_{\rm P} \\ U_{\rm F}\end{bmatrix} := \mathscr{H}_{T_{\rm ini}+N}(u^{\rm d}),
    \quad
    \begin{bmatrix}Y_{\rm P} \\ Y_{\rm F}\end{bmatrix} := \mathscr{H}_{T_{\rm ini}+N}(y^{\rm d}),
\label{eq:up_yp_uf_yf}
\end{equation}
where $U_{\rm P}~\in~\mathbb{R}^{m T_{\rm ini} \times H}$, $U_{\rm F}~\in~\mathbb{R}^{mN \times H}$, $Y_{\rm P}~\in~\mathbb{R}^{p T_{\rm ini} \times H}$, $Y_{\rm F}~\in~\mathbb{R}^{pN \times H}$ with $H = T - T_{\rm ini} - N + 1$. Suppose that $(u^\text{d},y^\text{d})$ satisfy the rank condition \eqref{eq:pe_condition} with $L=T_{\rm ini}+N$.
According to Lemma \ref{lemma:identifiability_condition}, $\{\col{u_{\rm ini},u}, \col{y_{\rm ini},y}\}$ is a valid trajectory of the system if and only if $\exists g \in \mathbb{R}^{H}$ such that
\begin{equation}
    \begin{bmatrix}U_{\rm P} \\ Y_{\rm P} \\ U_{\rm F} \\ Y_{\rm F}\end{bmatrix} g =
    \begin{bmatrix}u_{\rm ini} \\ y_{\rm ini} \\ u \\ y\end{bmatrix} ,
\label{eq:non-param_model}
\end{equation}
where $u_{\rm ini} \in \mathbb{R}^{mT_{\rm ini}}$ and $y_{\rm ini} \in \mathbb{R}^{pT_{\rm ini}}$ are the $T_{\rm ini}$-step past input/output observations.
If $T_{\rm ini} \geq \underline{\ell}$, given any future control input sequence $u \in \mathbb{R}^{mN}$, the associated output sequence $y \in \mathbb{R}^{pN}$ is uniquely determined via \eqref{eq:non-param_model} thanks to Lemma \ref{lemma:identifiability_condition}. In the remainder of this paper, we assume $T_\text{ini}\!=\ell$.

\section{Nominal DeePRC} \label{sec:nominal_deeprc}
In this section, we introduce the nominal DeePRC algorithm and exploration schemes using past trajectory data.

\subsection{Input/Output Convex Safe Set}
The input/output convex safe set $\CS^j$ at iteration $j$ is defined as the convex hull of all \emph{extended states} along previous safe trajectories:
\begin{equation}
    {\rm io}\mathcal{CS}^{j} := \left\{
        \sum_{i \in M^{j}}{
            \sum_{t=0}^{\infty}{
                \gamma_{t}^{i} \xi^i_{t}
            }
        } \mid
        \gamma_{t}^{i} \geq 0,
        \sum_{i \in M^{j}}{\sum_{t=0}^{\infty}{\gamma_{t}^{i}}} = 1
    \right\} ,
\label{eq:ioCS}
\end{equation} 
where $M^{j} := \{i \in \Z_{[0,j-1]} \mid \rzz{\sum_{t=0}^\infty h(y_t^i,u_t^i) < \infty}, u_t^i\in\mathcal{U}, y_t^i\in\mathcal{Y},\rz{\forall t}\}$ indicates past safe iterations. \rzz{Finite trajectories can be extended by appending $(y^{\rm F},u^{\rm F})$}.
For all $i\leq j$, $M^{i} \subseteq M^{j}$, and therefore ${\rm io}\mathcal{CS}^{i} \subseteq {\rm io}\mathcal{CS}^{j}$.
Every $\xi\in\CS^j$ is safe, meaning that it can be steered to $\xi^{\rm F}$ without violating the constraints.
\rz{Note that $\CS^j$ is updated only in between iterations.}
The \textit{cost-to-go} of $\xi_t^i$ is defined as
\begin{equation*}
    J^i(\xi^{i}_{t}) = \sum_{k=t}^{\infty}{h(u_k^i,y_k^i)}.
\end{equation*}
representing the cost required to steer $\xi_t^i$ to $\xi^{\rm F}$.
Note that $J^i(\xi_{0}^{i}) = J^i(\xi^{\rm S})$ is the cost of the entire trajectory $i$.
The cost-to-go can be extended to any $\xi \in {\rm io}\mathcal{CS}^{j}$, by defining
\begin{equation}
\begin{aligned}
    P^{j}(\xi) = 
    \min_{\gamma \geq 0} \quad & \sum_{i \in M^{j}}{\sum_{t=0}^{\infty}{\gamma_{t}^{i} J^i(\xi_{t}^{i})}} \\
    {\rm s.t.} \quad & \sum_{i \in M^{j}}{\sum_{t=0}^{\infty}{\gamma_{t}^{i}}} = 1, \ 
    \sum_{i \in M^{j}}{\sum_{t=0}^{\infty}{\gamma_{t}^{i} \xi^i_{t}}} = \xi.
\end{aligned}
\label{eq:terminal_cost}
\end{equation}
Note that $P^j(\xi)$ represents the cost required to safely steer $\xi$ to $\xi^{\rm F}$ under the control policy $\pi(\xi)=\sum_{i \in M^{j}}{\sum_{t=0}^{\infty}{\gamma_{t}^{i,*} u_{t}^i}}$, where $\gamma^{*}$ is the minimizer of \eqref{eq:terminal_cost}.

\begin{theorem} \label{theorem:invariance}
The set ${\rm io}\mathcal{CS}^{j}$ is control invariant\footnote{We refer the reader to \cite{borrelli_constrained_2003} for an exact definition.} for all $j$. For any $\xi \in {\rm io}\mathcal{CS}^j$, $P^{j}(\xi^{+}) \leq P^{j}(\xi) - h(\bar{u},\bar{y})$, $\bar{u} \in \mathcal{U}$ and $\bar{y} \in \mathcal{Y}$, where $\xi^+ = \tilde{A} \xi + \tilde{B} \bar{u}$, $\bar{y} = \tilde{C} \xi + \tilde{D} \bar{u}$, and $\textstyle{\bar{u} = \pi(\xi)}$.
\end{theorem}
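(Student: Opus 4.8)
The plan is to run the shifted-trajectory argument familiar from Learning MPC, exploiting the linearity of the non-minimal dynamics \eqref{eq:ABCD_tilde} together with the convexity of both $\CS^j$ and the stage cost $h$. Fix $\xi\in\CS^j$ and let $\gamma^{*}$ be a minimizer of \eqref{eq:terminal_cost}, writing $\sum_{i,t}$ for $\sum_{i\in M^j}\sum_{t=0}^{T^i}$, so that $\xi=\sum_{i,t}\gamma^{i,*}_t\xi^i_t$, $\sum_{i,t}\gamma^{i,*}_t=1$, and $P^j(\xi)=\sum_{i,t}\gamma^{i,*}_t J^i(\xi^i_t)$. With $\bar u=\pi(\xi)=\sum_{i,t}\gamma^{i,*}_t u^i_t$, linearity of \eqref{eq:ABCD_tilde} gives $\xi^+=\tilde A\xi+\tilde B\bar u=\sum_{i,t}\gamma^{i,*}_t(\tilde A\xi^i_t+\tilde B u^i_t)=\sum_{i,t}\gamma^{i,*}_t\xi^i_{t+1}$ and likewise $\bar y=\tilde C\xi+\tilde D\bar u=\sum_{i,t}\gamma^{i,*}_t y^i_t$; that is, the successor of the convex combination equals the convex combination of the one-step successors. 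Here I adopt the convention that the input associated with the terminal vertex $\xi^i_{T^i}=\xi^{\rm F}$ is $u^{\rm F}$: since $\xi^{\rm F}=\col{u^{\rm F},\dots,u^{\rm F},y^{\rm F},\dots,y^{\rm F}}$ is an equilibrium of \eqref{eq:ABCD_tilde} under $u^{\rm F}$ (the shift-and-append structure of the extended state leaves it fixed because $\tilde C\xi^{\rm F}+\tilde D u^{\rm F}=y^{\rm F}$), we have $\xi^i_{T^i+1}=\xi^{\rm F}$.

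Control invariance then follows immediately. Each $\xi^i_{t+1}$ is a safe extended state lying in $\CS^j$ (for $t=T^i$ it coincides with the vertex $\xi^{\rm F}$), so $\xi^+$ is a convex combination of points of the convex set $\CS^j$ and hence $\xi^+\in\CS^j$. Moreover $\bar u$ and $\bar y$ are convex combinations of the constraint-satisfying $u^i_t\in\mathcal U$ and $y^i_t\in\mathcal Y$, so $\bar u\in\mathcal U$ and $\bar y\in\mathcal Y$ by convexity of $\mathcal U$ and $\mathcal Y$. Thus the admissible input $\bar u$ keeps the state in $\CS^j$ while respecting \eqref{eq:constraints}.

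For the cost decrease, I would re-index the shifted combination into a feasible point of \eqref{eq:terminal_cost} for $P^j(\xi^+)$: collecting the weight of the boundary term $\xi^i_{T^i+1}=\xi^{\rm F}$ into the vertex $\xi^i_{T^i}$ yields nonnegative weights $\tilde\gamma$ with $\sum\tilde\gamma=1$ and $\sum\tilde\gamma^i_s\xi^i_s=\xi^+$. Optimality of $P^j$ gives $P^j(\xi^+)\le\sum_{i,t}\gamma^{i,*}_t J^i(\xi^i_{t+1})$, with the convention $J^i(\xi^{\rm F})=0$. The definition of the cost-to-go yields the telescoping identity $J^i(\xi^i_{t+1})=J^i(\xi^i_t)-h(u^i_t,y^i_t)$ for every $t\in\Z_{[0,T^i]}$, where the boundary case $t=T^i$ holds because both sides vanish (using $h(u^{\rm F},y^{\rm F})=0$). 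Substituting gives $\sum_{i,t}\gamma^{i,*}_t J^i(\xi^i_{t+1})=P^j(\xi)-\sum_{i,t}\gamma^{i,*}_t h(u^i_t,y^i_t)$. Finally, convexity of $h$ together with Jensen's inequality gives $h(\bar u,\bar y)\le\sum_{i,t}\gamma^{i,*}_t h(u^i_t,y^i_t)$, and chaining the three bounds delivers $P^j(\xi^+)\le P^j(\xi)-h(\bar u,\bar y)$.

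I expect the main obstacle to be the bookkeeping at the terminal index $t=T^i$: one must justify that applying $u^{\rm F}$ at $\xi^{\rm F}$ leaves the extended state at $\xi^{\rm F}$, and that the boundary term contributes zero to both the cost-to-go sum and the stage-cost sum, so that the shift-by-one re-indexing produces a genuinely feasible point of \eqref{eq:terminal_cost} for $P^j(\xi^+)$ with no leftover terms. Everything else reduces to linearity (commuting the dynamics with the convex combination) and convexity (for set membership and Jensen's inequality), both of which are structural and require no further assumptions beyond Assumption~\ref{assm:stand}.
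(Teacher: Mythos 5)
Your proposal is correct and follows essentially the same route as the paper's proof: commute the linear dynamics with the convex combination to show $\xi^+$ is again a convex combination of safe extended states (with the terminal vertex mapped back to $\xi^{\rm F}$), bound $P^j(\xi^+)$ by the suboptimal shifted weights, telescope the cost-to-go, and invoke convexity of $h$, $\mathcal{U}$, $\mathcal{Y}$ for the final inequality and constraint satisfaction. Your explicit bookkeeping of the terminal index and the convention $u^i_{T^i}=u^{\rm F}$ is a slightly more careful rendering of what the paper does implicitly, but it is not a different argument.
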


\begin{proof}
For any $\xi \in {\rm io}\mathcal{CS}^{j}$, let
\begin{align*}
\bar{u}&:=\textstyle\sum_{i\in M^j} \sum_{t=0}^{\infty} \gamma_t^{i,*}u_t^i, ~ \bar{y}:=\textstyle\sum_{i\in M^j} \sum_{t=0}^{\infty} \gamma_t^{i,*}y_t^i,
\end{align*}
where $\bar{u}\in \mathcal{U}$ and $\bar{y}\in \mathcal{Y}$ \rz{because of convexity}, and $\gamma^*$ is a minimizer of \eqref{eq:terminal_cost}. \rz{By linearity} we have $\xi^+:=\tilde{A}\xi+\tilde{B}\bar{u}\in\CS^j$ and, recognizing that $\tilde{\gamma}$, defined for $t\geq 1$ as $\tilde{\gamma}_t^i=\gamma_{t-1}^{i,*}$, is a feasible solution of \eqref{eq:terminal_cost} with $\xi=\xi^+$, we have
\begin{align*}
    P^{j}(\xi^{+}) & \leq 
   \textstyle \sum_{i \in M^{j}}{
        \textstyle\sum_{t=0}^{\infty}{
            \tilde{\gamma}_{t+1}^i J^i(\xi_{t+1}^{i})
        }
    } \\
    & = \textstyle\sum_{i \in M^{j}}{
        \textstyle\sum_{t=0}^{\infty}{
            \gamma_{t}^{i,*} \left(J^i(\xi_{t}^{i}) - h(u_{t}^{i},y_{t}^{i}) \right)
        }
    } \\
    & \leq P^{j}(\xi) - h(\bar{u},\bar{y}),
\end{align*}
\rz{where the last inequality follows from convexity.}
\end{proof}

\subsection{The Nominal DeePRC Algorithm}
The nominal DeePRC algorithm utilizes a single safe trajectory $\{u^0,y^0\}$ to construct both the Hankel matrix and the safe set $\CS^1$. Since the closed-loop trajectories can potentially have infinite length, we assume only a finite portion of them is used to construct the non parametric representation of the system. We use $T^j$ to denote the length of the recorded trajectory.

\begin{assumption} \label{assumption:ini_safe_traj}
\rzz{A safe trajectory $\{u^0,y^0\}$ is available with $u^0_{[0,T^0]}$, $y^0_{[0,T^0]}$ satisfying \eqref{eq:pe_condition} with $\textstyle{L=T_{\rm ini}\!+\!N}$ and $\textstyle{N\geq\ell}$.}
\end{assumption}

Let $\operatorname{col}(U_P,Y_P,U_F,Y_F)=\H_{T_{\text{ini}}+N}(u^0,y^0)$, $u(t)\in\R^{mN}$, $y(t)\in\R^{pN}$, $g(t)\in\R^{H}$, and
\begin{align*}
J_N^j(u(t),y(t),\xi_N(t)) := \sum_{k=0}^{N-1}{h(u_{k}(t),y_{k}(t))} + P^{j}(\xi_{N}(t)).
\end{align*}
The nominal DeePRC for time-step $t$ and iteration $j$ is formulated as follows:
\begin{subequations}
\label{eq:nom_deeprc}
\begin{align}
&J^{j,*}_N(\xi_t^j)\! = \!\operatorname*{min}_{g(t), u(t), y(t)} J_N^j(u(t),y(t),\xi_N(t)) \label{eq:nom_deeprc:cost} \\
& {\rm s.t.} ~ \begin{bmatrix} U_{\rm P}\\ Y_{\rm P}\\ U_{\rm F}\\ Y_{\rm F}\end{bmatrix} g(t) = \begin{bmatrix} u_{\rm ini}(t)\\ y_{\rm ini}(t)\\ u(t)\\ y(t)\end{bmatrix}, ~ \begin{bmatrix} u_{\rm ini}(t) \\ y_{\rm ini}(t)\end{bmatrix} = \xi^{j}_{t}, \label{eq:nom_deeprc:model}\\
& \hphantom{\rm s.t. ~} \xi_{N}(t) = \begin{bmatrix}u_{[N-\ell,N-1]}(t) \\ y_{[N-\ell,N-1]}(t)\end{bmatrix}, ~ 
\xi_{N}(t) \in {\rm io}\mathcal{CS}^{j}, \label{eq:nom_deeprc:terminal_cnstr}\\
& \hphantom{\rm s.t. ~} u_{k}(t) \in \mathcal{U}, ~ y_{k}(t) \in \mathcal{Y}, ~ \forall k \in [0,N-1],\label{eq:nom_deeprc:io_cnstr}
\end{align}
\end{subequations}

\subsection{Hankel Matrix Update}\label{subsection:hankel_update}
The DeePRC \eqref{eq:nom_deeprc} utilizes a static Hankel matrix. As new iterations $\{u^j,y^j\}$ are measured, we can augment the prediction horizon by modifying the Hankel matrix.
At iteration $j$, the largest feasible prediction horizon is
\begin{equation}
   N^j = \max \left\{
       N \, \middle\vert \, \rank{\mathcal{H}^j_N} = m(\ell+N) + n
   \right\} ,
\label{eq:max_N}
\end{equation}
with $\mathcal{H}^j_N = \begin{bmatrix} \mathscr{H}_{\ell+N}(u^0,y^0) & \ldots & \mathscr{H}_{\ell+N}(u^{j-1},y^{j-1}) \end{bmatrix}$.
Suppose our goal is \rz{to have $N^j=\bar{N} \geq \ell$ for some $j$}. \rz{The desired horizon $\bar{N}$ is a design parameter that can be chosen before the beginning of the operation, or can be incrementally updated online until some stopping criterion is met. We assume that $\bar{N}<T^j$ for all $j$.}
To fulfill the objective we require $\HH_{\bar{N}}^j$ to satisfy the rank condition \eqref{eq:pe_condition} with $L=\bar{L} := \ell + \bar{N}$.
If $\bar{N}>N^j$, then we solve \eqref{eq:nom_deeprc} with prediction horizon $N^j$ and matrix $\HH_{N^j}^j$.
The process continues until $\rank{\HH_{\bar{N}}^{\bar{j}}} = m\bar{L} + n$  for some $\bar{j}$, after which we initialize \eqref{eq:nom_deeprc} with $\bar{N}$ and \rz{$\HH_{\bar{N}}^{\bar{j}+1}$}.

\rz{This type of update takes place at the end of each iteration, however, we can also study how each new sample, gathered within the iteration, affects the rank of $\mathcal{H}_{\bar{N}}^j$. To this end, let $\HH_{\bar{N}}^{j,t}\!=\HH_{\bar{N}}^{j}$ for all $t<\bar{N}-1$ and define for all $t \geq \bar{N}-1$}
\begin{equation}\label{eq:max_hankel_update}
\HH_{\bar{N}}^{j,t+1}=
    \begin{bmatrix}
        \HH_{\bar{N}}^{j,t} &
        \begin{bmatrix}
            u^j_{[t-\bar{L}+1,t]} \\ y^j_{[t-\bar{L}+1,t]}
        \end{bmatrix}
    \end{bmatrix}.
\end{equation}
\rz{Ideally, each update in \eqref{eq:max_hankel_update} should ensure that $\operatorname{rank} \mathcal{H}_{\bar{N}}^{j,t+1} = 1 + \operatorname{rank} \mathcal{H}_{\bar{N}}^{j,t}$. Note that the matrices $\mathcal{H}_{\bar{N}}^{j,t}$ obtained through \eqref{eq:max_hankel_update} are not used in the DeePRC problem, they are only used to understand how newly gathered samples affect the rank of the Hankel matrix of depth $\bar{N}$, and will later be used for the input disturbance design.}

\section{Safe and Active Exploration} \label{sec:active_exploration}
Update \eqref{eq:max_hankel_update} can be inefficient, as new trajectories are not guaranteed to increase the rank condition and extend the prediction horizon.
In this section, we present an \emph{active exploration} strategy to safely reach the desired prediction horizon $\bar{N}$.
We design the control input as the sum of two components $u_t = \tilde{u}_t + d_t$, where $\tilde{u}_t$ is chosen to optimize closed-loop performance, and $d_t$ is an artificial input disturbance ensuring active exploration.

\subsection{Safe Exploration: Tube DeePRC}
To ensure safety, we assume $d_t$ is chosen from a known, user-designed bounded set $\mathcal{D}$, and develop a tube-based DeePRC scheme given by
\begin{subequations}
\label{eq:deeprc_2s}
\begin{align}
    \min_{g(t), v(t), z(t)} & ~ J^j_N(v(t),z(t),\zeta_{N}(t)) \\
    {\rm s.t.} \quad~ & \left(g(t),v(t),z(t),\zeta_{N}(t)\right) \in \mathcal{C}, \label{eq:deeprc_2s:compact_cnstr}\\
    & v_{k}(t) \in \bar{\mathcal{U}}, \ z_{k}(t) \in \bar{\mathcal{Y}}, \ \forall k \in [0,N-1], \label{eq:deeprc_2s:io_cnstr}
\end{align}
\end{subequations}
where $\mathcal{C}$ denotes the constraints \eqref{eq:nom_deeprc:model}-\eqref{eq:nom_deeprc:terminal_cnstr} with $(u,y,\xi)$ replaced with $(v,z,\zeta)$, and $z_t$, $v_t$, and $\zeta_t$ are the nominal output, input, and extended state, respectively.
The tightened constraints $\bar{\mathcal{U}} = \mathcal{U} \ominus \mathcal{E}_{u}$ and $\bar{\mathcal{Y}} = \mathcal{Y} \ominus \mathcal{E}_{y}$ are obtained using a robust positive invariant set $\mathcal{E}$ by $\mathcal{E}_{u} = \{e_u \in \mathbb{R}^{m} \mid \exists e \in \mathcal{E}, \ {\rm s.t.} \ e_u = T_{u} e\}$ and $\mathcal{E}_{y} = \{e_y \in \mathbb{R}^{p} \mid \exists e \in \mathcal{E}, \ {\rm s.t.} \ e_y = T_{y} e\}$, with $T_u$ and $T_y$ selecting the entries associated to the last input and the last output of the error $e$ between the true and the nominal extended state.
We then choose
\begin{equation}
    \tilde{u}^j_t = K(\xi^j_t - \zeta^j_t) + v_0^*(t),
\label{eq:tube_deeprc_control_law}
\end{equation}
where $K$ is a stabilizing output feedback gain (\rz{which can be computed either using \cite[Theorem 8]{de_persis_formulas_2020} or, under less restrictive conditions, using the techniques of \cite{alsalti2023notes}}).

One way to obtain $\mathcal{E}$ is to consider the nominal system
\begin{align*}
    \zeta_{t+1} & = \tilde{A}\zeta_{t} + \tilde{B}v_{t} \\
    z_{t} & = \tilde{C}\zeta_{t} + \tilde{D}v_{t} .
\end{align*}
and obtain $(\tilde{A},\tilde{B})$ from \eqref{eq:non-param_model} with $N=1$ as
\begin{equation*}
    \xi_{t+1} \! = \! \begin{bmatrix}
        \col{U_{{\rm P}[m:,:]}, U_{\rm F}} \\
        \col{Y_{{\rm P}[p:,:]}, Y_{\rm F}}
    \end{bmatrix}
    \begin{bmatrix}U_{\rm P} \\ Y_{\rm P} \\ U_{\rm F}\end{bmatrix}^{\dagger}
    \begin{bmatrix}u_{\rm ini}(t) \\ y_{\rm ini}(t) \\ u(t) \end{bmatrix}
    \! = \! \tilde{A} \xi_t + \tilde{B} u_t,
\end{equation*}
\rz{which always admits a solution under Assumption \ref{assumption:ini_safe_traj} thanks to \cite[Theorem 1]{de_persis_formulas_2020}}.
Assuming $(\tilde{A},\tilde{B})$ is stabilizable (sufficient conditions for stabilizability have been studied in \cite[Theorem 3]{de_persis_formulas_2020}, \rz{alternatively, one can perform a change of coordinates to obtain the extended-state of \cite{alsalti2023notes}}) we can choose $\mathcal{E}$ (either polytopic or ellipsoidal) to satisfy
\begin{equation}
e_{t+1} = (\tilde{A} + \tilde{B} K)e_t + \tilde{B} d_t \in \mathcal{E}, \ \forall e_t \in \mathcal{E}, \ \forall d_t \in \mathcal{D},
\label{eq:rpi}
\end{equation}
where $e_t:=\xi_t-\zeta_t$.
\rz{One practical way to avoid the $(\tilde{A},\tilde{B})$ parameterization, at the expense of a potential violation of the theoretical guarantees, is to choose $\mathcal{E}$ to be a sufficiently large ball centered at the origin. Data-driven techniques to estimate $\mathcal{E}$ are an active area of research and represent an interesting direction for future work.}

For safe planning, any element in the $\CS$ of problem \eqref{algo:deeprc_2s} is required to meet the tightened constraints $\bar{\mathcal{U}}, \bar{\mathcal{Y}}$. As in \cite{rosolia_robust_2017}, we need to strengthen Assumption \ref{assumption:ini_safe_traj} as follows.
\begin{assumption} \label{assumption:ini_robustly_safe_traj}
\rzz{A robustly safe trajectory $\{v^0,z^0\}$ (i.e., with $v^0_t \in \bar{\mathcal{U}}, z^0_t \in \bar{\mathcal{Y}}, \forall t \in \mathbb{Z}_{\geq 0}$) is available with $v^0_{[0,T^0]}$, $z^0_{[0,T^0]}$ satisfying \eqref{eq:pe_condition} with $\textstyle{L=T_{\rm ini}\!+\!N}$ and $\textstyle{N\geq\ell}$.}
\end{assumption}

To ensure robust constraint satisfaction, we construct the $\CS$ in \eqref{eq:deeprc_2s} using the nominal trajectories $\{v^j,z^j\}$. 

\subsection{Active Exploration: LKB Input Disturbance Design}\label{subsection:LKB}
Since $\tilde{u}_t^j$ in \eqref{eq:tube_deeprc_control_law} satisfies the nominal constraints for all $d_t^j\in\mathcal{D}$ \rz{(compare Corollary \ref{corollary:robust_deeprc_rec_feas})}, we can design $d_t^j$ to ensure the Hankel matrix update \eqref{eq:max_hankel_update} always produces an increment in the rank of the matrix. 
Choosing $d_t^j$ independently and uniformly from $\mathcal{D}$ at each time step may be inefficient, since the control input $\tilde{u}_t^j$ may already contain sufficient information to excite the system.
We propose a left-kernel-based (LKB) disturbance design inspired by \cite{van_waarde_beyond_2022} as a sample-efficient exploration strategy.

For the first $N^{j}-2$ time-steps of iteration $j$ we set $d_t^j=0$. Next, for $t \geq N^{j}-1$, let $\operatorname{col}(\bar{U}^{j,t},\bar{Y}^{j,t}) :=\HH_{\bar{N}}^{j,t}$, where $\bar{U}^{j,t} \in \mathbb{R}^{m\bar{L} \times H_c}$, $\bar{Y}^{j,t} \in \mathbb{R}^{p\bar{L} \times H_c}$ with $H_c = t-\bar{N}+1 + \sum_{i=0}^{j-1}{\left(T^i-\bar{N}+1\right)}$, and let $\bar{Y}^{j,t}_{\rm up} = \bar{Y}^{j,t}_{[:p(\bar{L}-1),:]}$.
If
\begin{equation}
    \col{y^j_{[t-\bar{L}+1,t-1]} , u^j_{[t-\bar{L}+1,t-1]} , \tilde{u}^j_t}
    \notin
    \image{\begin{bmatrix}
        \bar{Y}^{j,t}_{\rm up} \\ \bar{U}^{j,t}
    \end{bmatrix}},
\label{eq:excitation_check}
\end{equation}
then the candidate control input $\tilde{u}^j_t$ will excite the trajectory and thus we set $d^j_t = 0$.
Otherwise, \rz{consider any} $\kappa\in \mathbb{R}^{p(\bar{L}-1)+m\bar{L}}$ in the left kernel of $\operatorname{col}(\bar{Y}^{j,t}_{\rm up},\bar{U}^{j,t})$
with $\kappa_{[-m:]} \neq 0$ and take any $d_t^j$ satisfying
\begin{equation}
    \kappa^{\top} \col{y^j_{[t-\bar{L}+1,t-1]} , u^j_{[t-\bar{L}+1,t-1]} , \tilde{u}^j_t + d_t^j}
    \neq 0,
\label{eq:d_design_case2}
\end{equation}
where $\kappa_{[-m:]}$ denotes the last $m$ elements of $\kappa$~\cite{van_waarde_beyond_2022}.
If $\mathcal{D}$ is absorbing (e.g. any norm ball), there always exists some $d_t^j$ satisfying \eqref{eq:d_design_case2}. We then apply the input $u_t^j=\tilde{u}_t^j+d_t^j$ with $\tilde{u}_t^j$ as in \eqref{eq:tube_deeprc_control_law}. This ensures that $\rank{\HH_{\bar{N}}^{j,t+1}} = \rank{\HH_{\bar{N}}^{j,t}} + 1$ for all $t\geq N^{j}-1$, meaning that the desired horizon $\bar{N}$ can be reached in exactly $m\bar{L}+n - \rank{\HH^{1}_{\bar{N}}}$ time steps. Afterwards, $d_t^j\equiv 0$ and the nominal DeePRC \eqref{eq:nom_deeprc} can be used instead of \eqref{eq:deeprc_2s}.
The DeePRC scheme with LKB active exploration is summarized in Algorithm \ref{algo:deeprc_2s}.

\subsection{An End-to-End Formulation}
Algorithm \ref{algo:deeprc_2s} presents a two-stage approach for designing the control input. 
We also provide an end-to-end formulation, which integrates the online disturbance design procedure into the optimization problem
\begin{subequations}
\label{eq:deeprc_1s}
\begin{align}
    & \min_{g(t), v(t), z(t), d(t)} J_N^j(v(t),z(t),\zeta_{N}(t))\! +\! \lambda \|d(t)\|_1  \label{eq:deeprc_1s:cost} \\
    & \qquad~~ {\rm s.t.} ~~ \left(g(t),v(t),z(t),\zeta_{N}(t)\right) \in \mathcal{C}, \label{eq:deeprc_1s:compact_cnstr}\\
    & \hphantom{\qquad~~ {\rm s.t.} ~~} v_{k}(t) \in \bar{\mathcal{U}}, \ z_{k}(t) \in \bar{\mathcal{Y}}, \ \forall k \in [0,N-1], \label{eq:deeprc_1s:io_cnstr} \\
    & \hphantom{\qquad~~ {\rm s.t.} ~~} u_0(t) = v_0(t) + K(\xi^j_t - \zeta^j_t), \label{eq:deeprc_1s:cand_input}\\
    & \hphantom{\qquad~~ {\rm s.t.} ~~} \|\begin{bmatrix}y^j_{[t-\bar{L}+1,t-1]} \\ u^j_{[t-\bar{L}+1,t-1]} \\ u_0(t) + d(t)\end{bmatrix}^{\top} \!\!\! \mathcal{K}^{j,t}\|_{\infty} \geq \epsilon, d(t) \in \mathcal{D} \label{eq:deeprc_1s:excitation_condition}.
\end{align}
\end{subequations}
Condition \eqref{eq:excitation_check} is here enforced through the excitation constraint \eqref{eq:deeprc_1s:excitation_condition}, where $\epsilon \in \mathbb{R}_{>0}$ and $\mathcal{K}^{j,t}$ is an orthogonal basis of $\operatorname{col}(\bar{Y}^{j,t}_{\rm up}, \bar{U}^{j,t})$.
We further add $1$-norm penalty on the disturbance to ensure that its magnitude is kept to a minimum.
\eqref{eq:deeprc_1s} can be formulated as a mixed-integer quadratic program (MIQP).
The \rz{end-to-end} formulation should generally perform better than the two-stage, as $d(t)$ is part of the optimization problem.
Similar to the two-stage approach, we only solve \eqref{eq:deeprc_1s} until $\bar{N}$ is reached, then we utilize \eqref{eq:nom_deeprc}.

\newpage

\vphantom{0pt}\\[-2.8em]
\begin{algorithm}[H]
    \caption{DeePRC with LKB Active Exploration}\label{algo:deeprc_2s}
    \begin{algorithmic}[1]
    \State \textbf{Init}: $\{u^0,y^0\}$, $\xi^{\rm S}$, $\xi^{\rm F}$, $\bar{N} \geq \ell$, $\mathcal{D}$, $\CS^1$, $j=1$.
    \While{$ y^j\neq y^{j-1} $}
        \State $t=0$, $\xi^{j}_{0} = \xi^{\rm S}$, $\zeta^{j}_{0} = \xi^{\rm S}$.
        \If{$\bar{N}$ has been reached}
            \State Setup \eqref{eq:nom_deeprc} using $\mathcal{H}_{\bar{N}}^{\bar{j}+1}$.
            \While{not converged to $\xi^{\rm F}$}
                \State Solve \eqref{eq:nom_deeprc} and apply $u_t^j$. Get $y_{t}^{j}$ and $\xi_{t+1}^{j}$.
            \EndWhile
            \State ${\rm io}\mathcal{CS}^{j+1} = {\rm io}\mathcal{CS}^{j} \cup \{u^j,y^j\}$.
        \Else
            \State Compute $N^j$ using \eqref{eq:max_N}.
            \State Setup \eqref{eq:deeprc_2s} using $\HH_{N_j}^j$.
            \While{not converged to $\xi^{\rm F}$}
                \State Solve \eqref{eq:deeprc_2s} and get $\tilde{u}_t^j$ from \eqref{eq:tube_deeprc_control_law}. 
                \If{$t \leq \bar{N}\!-\!2$ or \eqref{eq:excitation_check} holds}
                    \State $d^j_t=0$.
                \Else
                    \State Select any $d^j_t$ satisfying \eqref{eq:d_design_case2}.
                    \State Update $\HH_{\bar{N}}^{j,t}$ with \eqref{eq:max_hankel_update}.
                \EndIf
                \State Apply $u^j_t = \tilde{u}^j_t + d^j_t$. Get $y_{t}^{j}$ and $\xi_{t+1}^{j}$.
            \EndWhile
            \State ${\rm io}\mathcal{CS}^{j+1} = {\rm io}\mathcal{CS}^{j} \cup \{v^j,z^j\}$.
        \EndIf
        \State $j\gets j+1$.
    \EndWhile
    \end{algorithmic}
\end{algorithm}

\section{Properties and Guarantees} \label{sec:properties}
\subsection{Closed-Loop Properties}
We first prove that the scheme in \eqref{eq:nom_deeprc} enjoys recursive feasibility and stability. We require the following.
\begin{assumption}\label{assumption:quad_upperbound}
\rzz{There exists $c_u>0$ such that $J_N^{j,*}(\xi) \leq c_u \|\xi\|_2^2$ for all $\xi$ for which \eqref{eq:nom_deeprc} is feasible.}
\end{assumption}

\begin{proposition} \label{proposition:closed-loop}
    Under Assumptions \ref{ass:stand}, \ref{assumption:ini_safe_traj}, and \ref{assumption:quad_upperbound}, the following hold for all $j \geq 1$:
    \textit{\romannum{1}}) Problem \eqref{eq:nom_deeprc} is feasible for all $t \in \mathbb{Z}_{\geq 0}$;
    \textit{\romannum{2}}) $(u^j,y^j)$ satisfy the constraints $\mathcal{U}$, $\mathcal{Y}$;
    \textit{\romannum{3}}) $\xi_t^j$ \rzz{exponentially} converges to $\xi^{\rm F}$.
\end{proposition}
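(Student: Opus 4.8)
The plan is to follow the standard argument for learning MPC schemes (as in \cite{rosolia_learning_2018}), using Theorem \ref{theorem:invariance} as the engine: the control invariance of $\CS^j$ drives recursive feasibility, while the descent property of the terminal function $P^j$ drives convergence. I would prove \textit{\romannum{1}}) by induction on the time step $t$, then obtain \textit{\romannum{2}}) as an immediate consequence of feasibility together with uniqueness of the output map, and finally establish \textit{\romannum{3}}) by using the optimal value $J_N^{j,*}$ as a Lyapunov-like function.

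For recursive feasibility, the base case $t=0$ uses the initial safe trajectory. Since $\xi^j_0 = \xi^{\rm S} = \xi^0_0$ and trajectory $0$ is safe with $0 \in M^j$ for every $j$, the length-$(\ell+N)$ segment $\{u^0_{[-\ell,N-1]}, y^0_{[-\ell,N-1]}\}$ is a genuine system trajectory whose terminal extended state is $\xi^0_N \in \CS^1 \subseteq \CS^j$; by Assumption \ref{assumption:ini_safe_traj} and Lemma \ref{lemma:identifiability_condition} it lies in the image of the Hankel matrix, so it yields a feasible $g(0)$. For the inductive step, given the optimizer at time $t$ with terminal state $\xi_N^*(t) \in \CS^j$, I would build the shifted candidate at $t+1$ by dropping $u_0^*(t)$ and appending the safe-set policy input $\bar u = \pi(\xi_N^*(t))$: the tail of the old prediction is still a valid segment, and Theorem \ref{theorem:invariance} guarantees that the new terminal state $\tilde A \xi_N^*(t) + \tilde B \bar u$ stays in $\CS^j$ while the appended step satisfies $\bar u \in \mathcal{U}$, $\bar y \in \mathcal{Y}$. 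The point requiring care is data-driven representability: the candidate is a valid length-$(\ell+N)$ trajectory of \eqref{eq:ABCD}, so Lemma \ref{lemma:identifiability_condition} again supplies a $g(t+1)$ reproducing it, and since $T_{\rm ini}=\ell\geq\underline{\ell}$ the initial condition $\xi^j_{t+1}$ is consistent with the applied input.

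Claim \textit{\romannum{2}}) then follows directly: at each $t$ the applied input $u^j_t = u_0^*(t)$ satisfies $u^j_t \in \mathcal{U}$ by \eqref{eq:nom_deeprc:io_cnstr}, and because the output is uniquely determined by the $\ell$-step history and the input, the realized $y^j_t$ equals the predicted $y_0^*(t) \in \mathcal{Y}$. For \textit{\romannum{3}}), I would substitute the shifted candidate into the cost and invoke the descent inequality of Theorem \ref{theorem:invariance}: the appended stage cost $h(\bar u, \bar y)$ is exactly cancelled by the drop in $P^j$, yielding $J_N^{j,*}(\xi^j_{t+1}) \leq J_N^{j,*}(\xi^j_t) - h(u^j_t, y^j_t)$. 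Since $J_N^{j,*}\geq 0$, this nonincreasing sequence converges, and telescoping bounds $\sum_{t\geq0} h(u^j_t,y^j_t)$ by $J_N^{j,*}(\xi^{\rm S})<\infty$, forcing $h(u^j_t,y^j_t)\to 0$.

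Finally, $R\succ0$ gives $u^j_t\to u^{\rm F}$ and $\|y^j_t-y^{\rm F}\|_Q^2\to0$; the remaining step is upgrading this to $\xi^j_t\to\xi^{\rm F}$, which I expect to be the main obstacle, since $Q\succeq0$ may be singular and the extended state stacks past outputs. Using the detectability of $(Q^{1/2},\tilde A)$ from Assumption \ref{assm:stand}, the unpenalized modes of the error dynamics driven by $u^j_t-u^{\rm F}\to0$ decay, which forces $y^j_t\to y^{\rm F}$ and hence $\xi^j_t=\col{u^j_{[t-\ell,t-1]},y^j_{[t-\ell,t-1]}}\to\xi^{\rm F}$. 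Everything else is routine learning-MPC bookkeeping; the genuinely new ingredient is threading the Hankel representability of Lemma \ref{lemma:identifiability_condition} through the feasibility induction.
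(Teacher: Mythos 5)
Your proposal is correct and follows essentially the same route as the paper's proof: feasibility at $t=0$ from the initial safe trajectory, an inductive step via the shifted candidate appended with the safe-set policy input $\bar u=\pi(\xi_N^*(t))$ combined with Theorem~\ref{theorem:invariance}, and asymptotic stability via the optimal cost as a Lyapunov function. The paper leaves the Lyapunov/detectability details to the cited references, whereas you spell them out, but there is no substantive difference in approach.
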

\begin{proof}
    Assumption \ref{assumption:ini_safe_traj} ensures $\xi^0_{\tau} \in {\rm io}\mathcal{CS}^j, \forall \tau \in [0,T^0-1]$. At $t=0$, as $\xi^j_0 = \xi^{\rm S} = \xi^0_0$, the length-$N$ trajectory $\hat{u}(0) = u^0_{[0,N-1]}$, $\hat{y}(0) = y^0_{[0,N-1]}$ taken from $\{u^0,y^0\}$ is a feasible solution.
    Next, suppose $u^*(t),y^*(t),g^*(t)$ is the solution of \eqref{eq:nom_deeprc} at time step $t$. 
    At time step $t+1$, let $\hat{u}(t+1) = \col{u^*_{[1,N-1](t)},\bar{u}}$, $\hat{y}(t+1) = \col{y^*_{[1,N-1](t)},\bar{y}}$, where $\bar{u} = \pi(\xi^{*}_{N}(t)) \in \mathcal{U}$.
    According to Theorem \ref{theorem:invariance}, $(\hat{u}(t+1),\hat{y}(t+1))$ satisfies all constraints and there exists some $\hat{g}(t+1)$ satisfying \eqref{eq:nom_deeprc:model} since $(\hat{u}(t+1),\hat{y}(t+1))$ is a valid trajectory of the system.
    This proves recursive feasibility.
    \rzz{Using the candidate solution ${\hat{u}(t+1),\hat{y}(t+1)}$ and Theorem \ref{theorem:invariance}, we can prove exponential stability by following the arguments in \cite{berberich_data-driven_2021}.}
\end{proof}

As the Tube DeePRC \eqref{eq:deeprc_2s} solves a similar problem to the nominal DeePRC, it inherits similar closed-loop properties.
\begin{corollary}\label{corollary:robust_deeprc_rec_feas}
    \rzz{Under Under Assumptions \ref{ass:stand}, \ref{assumption:ini_robustly_safe_traj}, and \ref{assumption:quad_upperbound} (with $\xi$ replaced with $\zeta$ and \eqref{eq:nom_deeprc} replaced with \eqref{eq:deeprc_2s})} the following holds for all $j \geq 1$:
    \textit{\romannum{1}}) Problem \eqref{eq:deeprc_2s} is feasible for all $t \in \mathbb{Z}_{\geq 0}$;
    \textit{\romannum{2}}) $(v^j,z^j)$ satisfy the tightened constraints $\bar{\mathcal{U}}$, $\bar{\mathcal{Y}}$, and $\zeta_t^j$ \rzz{exponentially} converges to $\xi^{\rm F}$;
    \textit{\romannum{3}}) $(u^j,y^j)$ satisfy the nominal constraints $\mathcal{U}$, $\mathcal{Y}$, and $\xi_t^j\to$ \rz{$\xi^\text{F} + \mathcal{E}$} for all $d^j_t \in \mathcal{D}$.
\end{corollary}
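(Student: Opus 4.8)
The plan is to obtain parts (\romannum{1}) and (\romannum{2}) by transporting the proof of Proposition \ref{proposition:closed-loop} to the nominal system with tightened constraints, and to obtain part (\romannum{3}) through a standard tube-MPC error-containment argument. The key observation is that \eqref{eq:deeprc_2s} is structurally identical to the nominal problem \eqref{eq:nom_deeprc}, with the triple $(u,y,\xi)$ replaced by $(v,z,\zeta)$, the sets $\mathcal{U},\mathcal{Y}$ replaced by the tightened sets $\bar{\mathcal{U}},\bar{\mathcal{Y}}$, and the safe set built from the nominal trajectories $\{v^i,z^i\}$. Since neither the Hankel model \eqref{eq:nom_deeprc:model} nor Theorem \ref{theorem:invariance} depend on the particular constraint sets, the entire argument behind Proposition \ref{proposition:closed-loop} applies verbatim to \eqref{eq:deeprc_2s} once Assumption \ref{assumption:ini_robustly_safe_traj} supplies an initial robustly safe trajectory.

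For part (\romannum{1}), I would reproduce the recursive-feasibility argument of Proposition \ref{proposition:closed-loop}: at $t=0$ the trajectory $\{v^0,z^0\}$ provides a feasible length-$N$ candidate, and at each subsequent step the shifted candidate $\col{v^*_{[1,N-1]}(t),\pi(\zeta_N^*(t))}$ remains feasible because Theorem \ref{theorem:invariance} guarantees that $\CS^j$ (now for the nominal dynamics) is control invariant and the appended input $\pi(\zeta_N^*(t))\in\bar{\mathcal{U}}$. For part (\romannum{2}), the cost decrease $P^j(\zeta^{+})\le P^j(\zeta)-h(\bar v,\bar z)$ from Theorem \ref{theorem:invariance} shows that the optimal value is a Lyapunov function for the nominal closed loop, so $\zeta_t^j\to\xi^{\rm F}$, while $v_t^j\in\bar{\mathcal{U}},\,z_t^j\in\bar{\mathcal{Y}}$ is immediate from \eqref{eq:deeprc_2s:io_cnstr}.

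Part (\romannum{3}) is the genuinely new step and follows standard tube reasoning. Since the algorithm initializes $\xi_0^j=\zeta_0^j=\xi^{\rm S}$, the error starts at $e_0=0\in\mathcal{E}$; the law \eqref{eq:tube_deeprc_control_law} with $u_t^j=\tilde u_t^j+d_t^j$ produces exactly the recursion $e_{t+1}=(\tilde A+\tilde B K)e_t+\tilde B d_t^j$ of \eqref{eq:rpi}, so robust positive invariance gives $e_t\in\mathcal{E}$ for all $t$ and all $d_t^j\in\mathcal{D}$. Reading off the last-input and last-output blocks of $e_{t+1}$, namely $u_t^j-v_t^j=T_u e_{t+1}\in\mathcal{E}_u$ and $y_t^j-z_t^j=T_y e_{t+1}\in\mathcal{E}_y$, and combining with the nominal feasibility $v_t^j\in\bar{\mathcal{U}},\,z_t^j\in\bar{\mathcal{Y}}$ from part (\romannum{2}), the definitions $\bar{\mathcal{U}}=\mathcal{U}\ominus\mathcal{E}_u$, $\bar{\mathcal{Y}}=\mathcal{Y}\ominus\mathcal{E}_y$ yield $u_t^j\in\mathcal{U}$ and $y_t^j\in\mathcal{Y}$. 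Finally, writing $\xi_t^j=\zeta_t^j+e_t$ with $\zeta_t^j\to\xi^{\rm F}$ and $e_t\in\mathcal{E}$ shows that $\xi_t^j$ converges to the RPI set centered at $\xi^{\rm F}$.

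I expect the main obstacle to be bookkeeping rather than conceptual. One must be careful that $T_u,T_y$ extract precisely the last-input and last-output components of the extended error $e_{t+1}$, so that the inclusions $T_u e_{t+1}\in\mathcal{E}_u$ and $T_y e_{t+1}\in\mathcal{E}_y$ match the projections used to tighten the constraints, and that \eqref{eq:rpi} is invoked for the whole disturbance range $\mathcal{D}$ rather than only the disturbances actually applied during exploration. The crux that makes the two arguments combine cleanly is recognizing that recursive feasibility is a purely nominal property, decoupled from $d_t^j$: the disturbance enters the real system only through the tube, so the nominal safe set and cost-to-go inherit control invariance under the tightened constraints exactly as in Theorem \ref{theorem:invariance}.
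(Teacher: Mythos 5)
Your proposal is correct and matches the paper's intent: the paper in fact gives no explicit proof of this corollary, merely asserting that the Tube DeePRC ``inherits similar closed-loop properties'' from Proposition \ref{proposition:closed-loop}, which is precisely the transport argument you carry out for parts (\romannum{1})--(\romannum{2}), while your part (\romannum{3}) is the standard tube containment argument via \eqref{eq:rpi} and the selectors $T_u,T_y$ that the paper's constraint-tightening construction is designed to support. No gaps.
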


\subsection{Recursive Feasibility of the End-to-End Formulation}
Let $\mathcal{D} = \{d \in \mathbb{R}^{m} \mid \|d\|_{\infty} \leq \bar{d}\}$ for some $\bar{d} \in \mathbb{R}_{>0}$, and let $\mathcal{K}^{j,t}_{\rm u} = \mathcal{K}^{j,t}_{[-m:,:]}$.
\begin{proposition}
    Under Assumption \ref{assumption:ini_robustly_safe_traj}, if $\bar{d} \geq \epsilon$ and the $1$-norm of each non-zero column of $\mathcal{K}^{j,t}_{\rm u}$ is $1$, then \eqref{eq:deeprc_1s} is recursively feasible.
\end{proposition}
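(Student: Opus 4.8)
The plan is to exploit the fact that in \eqref{eq:deeprc_1s} the disturbance $d(t)$ appears only in the excitation constraint \eqref{eq:deeprc_1s:excitation_condition} and the penalty, whereas the variables $g(t),v(t),z(t),\zeta_N(t)$ are constrained by the $d$-independent conditions \eqref{eq:deeprc_1s:compact_cnstr}--\eqref{eq:deeprc_1s:cand_input}. I would therefore split the recursive-feasibility argument into two decoupled parts: first, produce a feasible nominal solution at $t+1$, which fixes the candidate input $u_0(t+1)$ via \eqref{eq:deeprc_1s:cand_input}; second, for that $u_0(t+1)$, construct a disturbance $d(t+1)\in\mathcal{D}$ satisfying \eqref{eq:deeprc_1s:excitation_condition}.

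For the nominal part, I would reuse the shifted-candidate construction behind Proposition~\ref{proposition:closed-loop} and its Corollary. Given the optimizer at time $t$, dropping the first nominal input and appending the safe-set policy $\pi(\zeta_N^\ast(t))$ at the tail yields a nominal trajectory that satisfies the dynamics, the terminal constraint (by control invariance of $\CS^j$, Theorem~\ref{theorem:invariance}), and the tightened constraints $\bar{\mathcal{U}},\bar{\mathcal{Y}}$, all independently of $d$. The RPI property \eqref{eq:rpi} guarantees $e_{t+1}=\xi^j_{t+1}-\zeta^j_{t+1}\in\mathcal{E}$ for every admissible disturbance, so this step is identical to the tube case and delivers a feasible $(g,v,z,\zeta_N)$ and hence a well-defined $u_0(t+1)$.

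The excitation part is the only genuinely new step. Write $w:=\col{y^j_{[t-\bar L+1,t-1]},\,u^j_{[t-\bar L+1,t-1]},\,u_0(t)}$, so that the vector inside the norm of \eqref{eq:deeprc_1s:excitation_condition} is $w^\top\mathcal{K}^{j,t}+\col{0,0,d}^\top\mathcal{K}^{j,t}$, whose $c$-th entry equals $a_c+d^\top b_c$ with $a_c:=w^\top(\mathcal{K}^{j,t})_{:,c}$ and $b_c:=(\mathcal{K}^{j,t}_{\rm u})_{:,c}$. Choosing any \emph{non-zero} column $c$ of $\mathcal{K}^{j,t}_{\rm u}$ and setting $d=\sigma\bar d\,\mathrm{sign}(b_c)$ (elementwise) with $\sigma=+1$ if $a_c\ge 0$ and $\sigma=-1$ otherwise gives $d\in\mathcal{D}$ and, using the unit-$1$-norm hypothesis $\|b_c\|_1=1$, $d^\top b_c=\sigma\bar d$. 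Hence $|a_c+d^\top b_c|=|a_c|+\bar d\ge\bar d\ge\epsilon$, so the $\infty$-norm in \eqref{eq:deeprc_1s:excitation_condition} is at least $\epsilon$ and the excitation constraint is feasible. Combining the two parts yields a complete feasible point at $t+1$.

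The main obstacle I anticipate is guaranteeing that $\mathcal{K}^{j,t}_{\rm u}$ has at least one non-zero column at every step where \eqref{eq:deeprc_1s} is solved; otherwise $d$ cannot influence \eqref{eq:deeprc_1s:excitation_condition} and the $\epsilon$-margin may be unreachable. This amounts to showing that the orthogonal complement encoded by $\mathcal{K}^{j,t}$ is not contained in $\{\kappa:\kappa_{[-m:]}=0\}$, i.e., that a left-kernel direction with non-zero last-$m$ block exists---precisely the property invoked in Section~\ref{subsection:LKB} around \eqref{eq:d_design_case2}. I would argue this holds throughout the exploration phase, since \eqref{eq:deeprc_1s} is solved only while $\bar N$ has not been reached, so the rank of $\HH^{j,t}_{\bar N}$ in \eqref{eq:max_N} has not yet saturated at $m\bar L+n$ and the last input block can still increase it. A secondary, routine point is that the normalization $\|b_c\|_1=1$ on non-zero columns is without loss of generality, as any orthonormal basis may be rescaled columnwise; it is exactly this normalization that turns the budget $\bar d\ge\epsilon$ into the margin used above.
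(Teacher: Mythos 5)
Your proposal is correct and follows essentially the same route as the paper: a shifted candidate solution handles the $d$-independent constraints as in Proposition~\ref{proposition:closed-loop}, and the excitation constraint is satisfied by picking a non-zero column $i$ of $\mathcal{K}^{j,t}_{\rm u}$ and setting $d=\pm\bar d\,\mathrm{sign}(\mathcal{K}^{j,t}_{{\rm u}[:,i]})$ with the sign matched to the constant term, so that the unit $1$-norm assumption yields a margin of at least $\bar d\ge\epsilon$. Your extra remark on why a column with non-zero last-$m$ block exists during the exploration phase is a point the paper leaves implicit, but it does not change the argument.
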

\begin{proof}    
    Let ${\mathcal{K}}^{j,t}_{{\rm u}[:,i]}$ denote the $i$-th column of $\mathcal{K}^{j,t}_{\rm u}$, and suppose it is non-zero. Given any $\hat{u}_0(t)$, we have
    \begin{align}
        & \|\col{ y^j_{[t-\bar{L}+1,t-1]} , u^j_{[t-\bar{L}+1,t-1]} , \hat{u}_0(t) + \hat{d}(t) }^{\top} {\mathcal{K}}^{j,t}\|_{\infty} \notag \\
        = \, & \|c + \hat{d}(t)^{\top} {\mathcal{K}}^{j,t}_{{\rm u}} \|_{\infty} \geq |c_i + \hat{d}(t)^{\top} {\mathcal{K}}^{j,t}_{{\rm u}[:,i]} | \label{eq:deeprc_1s_rfeas_step1}
    \end{align}
    where $c$ is a constant row vector and $c_i$ denotes its $i$-th element. By choosing $\hat{d}(t) = \bar{d} \cdot {\rm sign}({\mathcal{K}}^{j,t}_{{\rm u}[:,i]})$ if $c_i \geq 0$ and $\hat{d}(t) = - \bar{d} \cdot {\rm sign}({\mathcal{K}}^{j,t}_{{\rm u}[:,i]})$ otherwise,
    with ${\rm sign}(\cdot)$ is applied element-wise, it holds that $\eqref{eq:deeprc_1s_rfeas_step1} \geq \bar{d} \cdot \|\mathcal{K}^{j,t}_{{\rm u}[:,i]}\|_1 \geq \epsilon$. In other words, for any $\hat{u}_{0}(t)$, we can construct the disturbance $\hat{d}(t)$ such that the excitation constraint \eqref{eq:deeprc_1s:excitation_condition} is satisfied. The rest of the proof constructs a candidate input/output sequence similar to that of Proposition \ref{proposition:closed-loop}, such that constraints \eqref{eq:deeprc_1s:compact_cnstr}-\eqref{eq:deeprc_1s:io_cnstr} are satisfied.
\end{proof}

\subsection{Convergence Properties}

From the discussion in Subsection \ref{subsection:LKB}, we immediately have the following.
\begin{proposition}\label{proposition:rank_convergence}
Both the DeePRC with LKB active exploration in Algorithm \ref{algo:deeprc_2s} and the end-to-end formulation \eqref{eq:deeprc_1s} reach the desired prediction horizon $\bar{N}$ in exactly $m\bar{L}+n-\operatorname{rank}(\HH_{\bar{N}}^1)$ time steps.
\end{proposition}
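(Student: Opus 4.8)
The plan is to reduce the claim to the single fact, already highlighted at the end of Subsection \ref{subsection:LKB}, that every active-exploration step raises the rank of $\HH_{\bar{N}}$ by exactly one; the step count then follows by a trivial arithmetic argument. Concretely, I would first show that for each time step $t$ at which a full length-$\bar{L}$ window is available, appending the corresponding column through \eqref{eq:max_hankel_update} yields $\rank{\HH_{\bar{N}}^{j,t+1}} = \rank{\HH_{\bar{N}}^{j,t}} + 1$, and then count the number of such steps needed to pass from the initial rank $\rank{\HH_{\bar{N}}^1}$ to the identifiability threshold $m\bar{L}+n$ of \eqref{eq:pe_condition} with $L = \bar{L}$.

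For the two-stage scheme of Algorithm \ref{algo:deeprc_2s}, I would distinguish the two branches of the LKB design. If the excitation check \eqref{eq:excitation_check} holds, the partial new column already lies outside $\image{(\col{\bar{Y}^{j,t}_{\rm up},\bar{U}^{j,t}})}$, so the completed column is linearly independent of the existing ones and the rank increases. If \eqref{eq:excitation_check} fails, I would invoke the left-kernel certificate of \cite{van_waarde_beyond_2022}: there exists $\kappa$ annihilating $\col{\bar{Y}^{j,t}_{\rm up},\bar{U}^{j,t}}$ with $\kappa_{[-m:]}\neq 0$, and because $\mathcal{D}$ is absorbing one can pick $d_t^j$ so that \eqref{eq:d_design_case2} holds; since $\kappa^{\top}$ vanishes on the image but not on the augmented column, the latter is again outside the image. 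In both branches the new column is independent of the previous ones, and since appending one column can raise the rank by at most one, the increase is exactly one.

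For the end-to-end formulation \eqref{eq:deeprc_1s}, the same conclusion follows from the excitation constraint \eqref{eq:deeprc_1s:excitation_condition}: with $\mathcal{K}^{j,t}$ an orthonormal basis certifying membership relative to the current column space, enforcing $\|\cdot^{\top}\mathcal{K}^{j,t}\|_{\infty}\geq\epsilon>0$ guarantees that the designed column lies strictly outside that space, hence raises the rank by exactly one. Feasibility of this constraint — so that a suitable $d(t)$ always exists — is exactly what the recursive-feasibility result of the preceding proposition provides. It then remains to count: the columns of $\HH_{\bar{N}}$ are length-$\bar{L}$ trajectories of \eqref{eq:ABCD}, each determined by an initial state in $\R^n$ and an input in $\R^{m\bar{L}}$, so $\rank{\HH_{\bar{N}}} \leq m\bar{L}+n$ throughout, with equality being precisely \eqref{eq:pe_condition}. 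Because the rank increases by exactly one per step and can never exceed $m\bar{L}+n$, it attains $m\bar{L}+n$ after exactly $m\bar{L}+n-\rank{\HH_{\bar{N}}^1}$ steps, at which point $\bar{N}$ is reached; the count accumulates across iterations since $\HH_{\bar{N}}$ persists through $\HH_{\bar{N}}^{j+1}=\HH_{\bar{N}}^{j,T^j}$.

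The substantive step is the per-step rank increment, and within it the failing branch of \eqref{eq:excitation_check}: I expect the main obstacle to be cleanly justifying the existence of the left-kernel direction $\kappa$ with nonzero last block together with an admissible disturbance $d \in \mathcal{D}$ activating it — the argument borrowed from \cite{van_waarde_beyond_2022} — since the ``exactly one'' count and the upper bound $m\bar{L}+n$ are then immediate.
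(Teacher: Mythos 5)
Your argument is correct and follows essentially the same route as the paper, which offers no separate proof beyond the remark that the claim follows ``immediately'' from the discussion in Subsection IV-B: there, as in your write-up, the key facts are that each exploration step appends a column lying outside the current image (either because \eqref{eq:excitation_check} already holds or because the left-kernel certificate of \cite{van_waarde_beyond_2022} yields an admissible $d_t^j$ satisfying \eqref{eq:d_design_case2}), so the rank increases by exactly one per step until the ceiling $m\bar{L}+n$ of \eqref{eq:pe_condition} is hit. Your version merely fills in the details the paper leaves implicit (the projection argument relating the partial column to the full one, the $m\bar{L}+n$ upper bound on the rank, and the analogous role of \eqref{eq:deeprc_1s:excitation_condition} in the end-to-end case), so no substantive divergence or gap.
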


\rz{Proposition~\ref{proposition:rank_convergence} implies that we can obtain finite horizon approximations of \eqref{eq:ift_prob} in finitely many time steps and a direct data-driven fashion. 
Moreover, if $\xi_t$ in \eqref{eq:ift_prob} reaches the origin in finitely many time-steps $c$, and $c \leq \bar{N}$, then $(u^j,y^j)$ converges to the solution of \eqref{eq:ift_prob} in finitely many iterations. }
Using the arguments in~\cite{rosolia_optimality_2023}, we further have the following.
\begin{theorem} \label{proposition:convergence_optimal_inf}
Let Assumption \ref{assumption:ini_safe_traj} hold, and suppose the desired prediction horizon $\bar{N}$ is achieved at iteration $\bar{j}$.
Then $J^j(\xi^{\rm S}) \geq J^{j+1}(\xi^{\rm S})$ for all $j > \bar{j}$.
Moreover, if $u^{j+1} = u^{j} $ in Algorithm~\ref{algo:deeprc_2s} for some finite $j$ and the LICQ condition in~\cite[Assumption 3]{rosolia_optimality_2023} holds, then $u^j$ solves~\eqref{eq:ift_prob}.
\end{theorem}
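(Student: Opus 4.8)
The plan is to treat the two claims separately, since the first is the familiar LMPC monotonicity argument while the second rests on matching the finite-horizon optimality conditions to those of the \emph{convex} infinite-horizon problem \eqref{eq:ift_prob}.

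For the monotonicity $J^j(\xi^{\rm S}) \geq J^{j+1}(\xi^{\rm S})$, I would first establish a per-step Lyapunov decrease along the closed loop of iteration $j+1$. Using the shifted candidate from the proof of Proposition~\ref{proposition:closed-loop}—append $\bar u = \pi(\xi^*_N(t))$ to the tail of the previous optimizer—together with Theorem~\ref{theorem:invariance}, one obtains $J_N^{j+1,*}(\xi_{t}^{j+1}) \geq h(u_t^{j+1},y_t^{j+1}) + J_N^{j+1,*}(\xi_{t+1}^{j+1})$. Summing over the iteration and using $J_N^{j+1,*}(\xi^{\rm F})=0$ telescopes to $J_N^{j+1,*}(\xi^{\rm S}) \geq J^{j+1}(\xi^{\rm S})$. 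It then remains to bound $J_N^{j+1,*}(\xi^{\rm S})\leq J^j(\xi^{\rm S})$. Here I would exploit $\CS^{j}\subseteq\CS^{j+1}$: the first $\bar N$ samples of trajectory $j$ form a feasible prediction at $\xi^{\rm S}$ for \eqref{eq:nom_deeprc} at iteration $j+1$, with terminal state $\xi_{N}^j\in\CS^{j+1}$ and, by \eqref{eq:terminal_cost}, terminal cost $P^{j+1}(\xi_{N}^j)\leq J^j(\xi_{N}^j)$. The horizon stage cost plus $J^j(\xi_{N}^j)$ reassembles exactly $J^j(\xi^{\rm S})$, yielding the bound and hence the claim. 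The restriction $j>\bar j$ is precisely what guarantees both iterations use the same fixed horizon $\bar N$ and the nominal scheme \eqref{eq:nom_deeprc}, so the candidate and the Lyapunov step are directly comparable.

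For the optimality claim I would first note that, once $\bar N$ is reached, the rank condition \eqref{eq:pe_condition} holds for $L=\bar L$, so by Lemma~\ref{lemma:identifiability_condition} the non-parametric model \eqref{eq:non-param_model} represents all length-$\bar N$ trajectories exactly; the DeePRC \eqref{eq:nom_deeprc} is therefore equivalent to a model-based LMPC with convex stage cost, linear (non-minimal) dynamics \eqref{eq:ABCD_tilde}, convex terminal cost $P^j$, and convex safe set $\CS^j$. This places us exactly in the setting of \cite{rosolia_optimality_2023}. The fixed-point hypothesis $u^{j+1}=u^j$ renders the closed-loop trajectory stationary, so the terminal cost becomes consistent along it, $P^{j+1}(\xi_t^j)=J^j(\xi_t^j)$, and the optimizer of \eqref{eq:nom_deeprc} at each $\xi_t^j$ coincides with the tail of the converged trajectory. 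I would then write the KKT system of \eqref{eq:nom_deeprc} at each step; LICQ (\cite[Assumption 3]{rosolia_optimality_2023}) guarantees unique multipliers, so the subgradient of $P^{j}$ at $\xi_{N}^j$ can be identified with the costate of the true cost-to-go. Concatenating the per-step stationarity conditions with this identification yields the KKT conditions of the infinite-horizon problem \eqref{eq:ift_prob}.

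Finally, because \eqref{eq:ift_prob} is convex—quadratic cost, linear dynamics, convex $\mathcal{U},\mathcal{Y}$—its KKT conditions are sufficient for global optimality, so the converged $u^j$ solves \eqref{eq:ift_prob}. The main obstacle is the middle step of the optimality argument: rigorously identifying the terminal-cost subgradient with the infinite-horizon costate and showing the finite-horizon stationarity conditions ``glue'' into the infinite-horizon ones. This is exactly where both LICQ and the fixed-point hypothesis are indispensable, and where I would lean on the machinery of \cite{rosolia_optimality_2023} rather than reprove it from scratch.
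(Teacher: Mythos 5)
Your proposal is correct and takes essentially the same route as the paper: the paper's proof consists of exactly two citations — \cite[Theorem 2]{rosolia_learning_2018} for the cost monotonicity and \cite[Theorem 2]{rosolia_optimality_2023} for the optimality of the fixed point — and your two arguments (the shifted-candidate Lyapunov decrease plus the feasibility of the previous iteration's prefix for the first claim, and the KKT/LICQ matching to the convex infinite-horizon problem for the second) are faithful reconstructions of precisely those cited results adapted to the DeePRC setting.
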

\begin{proof}
The first point follows from \cite[Theorem 2]{rosolia_learning_2018}. The second point follows from~\cite[Theorem 2]{rosolia_optimality_2023}.
\end{proof}

For $j \leq \bar{j}$, we can also prove that the nominal cost of \eqref{eq:deeprc_2s} is non-increasing over iterations.
\rz{Theorem \ref{proposition:convergence_optimal_inf} guarantees that the solution of \eqref{eq:ift_prob} be reached asymptotically for any horizon $N$. However, controllers with larger horizons generally lead to faster convergence, as showcased in the simulations of Section \ref{sec:numerical_examples}, and exhibit overall better performance.}

\section{Numerical Examples} \label{sec:numerical_examples}
We consider the following four-dimensional system
\begin{equation*}
\begin{aligned}
    &\resizebox{\columnwidth}{!}{$A  = \begin{bmatrix}
        1 & 0 & 0.1 & 0 \\
        0 & 1 & 0 & 0.1 \\
        0 & 0 & 1 & 0 \\
        0 & 0 & 0 & 1
    \end{bmatrix}\!,
    B = \begin{bmatrix}0.1 & 0 \\ 0 & 0.1 \\ 0 & 0.1 \\ 0.1 & 0\end{bmatrix}\!, C  = \begin{bmatrix}1 & 0 & 0 & 0 \\ 0 & 1 & 0 & 0\end{bmatrix}$\!,} 
\end{aligned}
\end{equation*}
with $D=0$, subject to $\mathcal{U} = [-1.5,1.5]^{2}$, $\mathcal{Y} = [-1.5,1.5]^{2}$. We assume the true lag $\underline{\ell}=2$ is unknown and set $\ell = 4$. The initial and target states are $(u^{\rm S}, y^{\rm S}) = ([0 \ 0]^{\top}, [0 \ 0]^{\top})$ and $(u^{\rm F}, y^{\rm F}) = ([0 \ 0]^{\top}, [0.4 \ -\!0.4]^{\top})$. 
The cost matrices are $Q=\mathbf{I}^4$ and $R=0.1 \cdot \mathbf{I}^2$. 
In this setting, the desired prediction horizon is $\bar{N}=50$.
The initial safe trajectory $\{u^0,y^0\}$ is obtained using the LQR controller ($\bar{Q} = \mathbf{I}^{4}, \bar{R} = \mathbf{I}^2$). To meet Assumption \ref{assumption:ini_robustly_safe_traj}, small random disturbances are added to the LQR control input for the first $20$ time steps.
Using \eqref{eq:max_N}, we verify that $\{u^0,y^0\}$ produces a prediction horizon of $N^1 = 8$. 

In Figure \ref{fig:rank_increment}, we compare the performance of passive Hankel update (Passive), two-stage with LKB design (2s-LKB), \rz{end-to-end} approach (end-to-end).
The figure illustrates the relationship between the rank of the matrix $\HH_{\bar{N}}^j$ and its number of columns. 
Without exploration, the rank increment is considerably slower, indicating that many added trajectories to the matrix lack sufficient information. 
Meanwhile, for the methods with exploration, the trend is approximately linear, as expected.
\begin{figure}[b]
    \centering
    \input{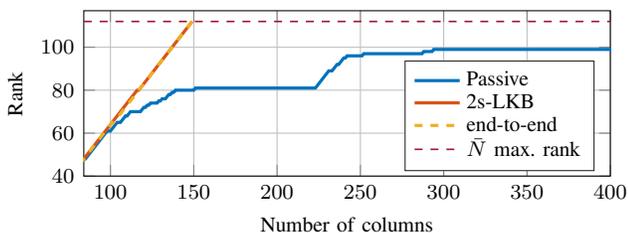}
    \vspace{-0.25cm}
    \caption{Evolution of the rank of the Hankel matrix.}
    \label{fig:rank_increment}
\end{figure}
\begingroup\addtolength{\tabcolsep}{-1pt}
\begin{table}[b]
\centering
\caption{Prediction horizon and trajectory costs over iterations}
\begin{tabular}{|c|cc|cc|cc|cc|}
\hline
 & \multicolumn{2}{c|}{Nominal} & \multicolumn{2}{c|}{Passive} & \multicolumn{2}{c|}{2s-LKB} & \multicolumn{2}{c|}{End-to-end} \\ \cline{2-9} 
\multirow{-2}{*}{$j$} &  $N$ & $J^j$  &  $N$ & $J^j$ &  $N$ & $J^j$ & $N$ & $J^j$ \\ \hline
0 & \textbackslash{} & 8.635158 & \textbackslash{} & 8.635158 & \textbackslash{} & 8.635158 & \textbackslash{} & 8.635158 \\
1 & 8 & 7.784081 & 8 & 7.818932 & 8 & 7.827334 & 8 & 7.826908 \\
2 & 8 & 7.765810 & 17 & 7.756467 & 50 & 7.748497 & 50 & 7.748497 \\
3 & 8 & 7.761973 & 19 & 7.749598 & 50 & 7.748497 & 50 & 7.748497 \\
4 & 8 & 7.760775 & 20 & 7.748698 & 50 & 7.748497 & 50 & 7.748497 \\
\hline
\end{tabular}
\label{tab:four_tank_deeprc}
\end{table}\endgroup
In Table \ref{tab:four_tank_deeprc} we further compare the closed-loop costs of the 2-stage LKB and the 1-stage approaches against a nominal DeePRC scheme with fixed horizon $N=8$ (without exploration) and againts the passive scheme.
The methods with active exploration achieve worse cost in $\textstyle{j=1}$ due to constraint tightening, but outperform the nominal and the passive schemes later thanks to the extended prediction horizon.
In the first iteration, the \rz{end-to-end} performs better than the LKB, since disturbance design is incoroprated in the MPC problem, \rz{however, it requires an average computation time of 0.757s, as opposed to the 0.058 seconds of the two-stage methods}.
\rz{The difference in performance between active and passive methods in our example is small, highlighting that further research is required to understand the impact of prediction horizon on closed-loop performance. Indeed, Theorem \ref{proposition:closed-loop} guarantees that the desired prediction horizon will be reached through active exploration, but provides no claims about the relative performance improvement.}

\section{Conclusion} \label{sec:conclusion}
In this work, we presented the DeePRC algorithm, a direct data-driven approach to control LTI systems performing iterative tasks. 
The DeePRC requires only an initial safe trajectory with a low excitation order and it guarantees sample-efficient rank-increment while maintaining safety in closed-loop.
Under suitable conditions, DeePRC is guaranteed to converge to the best achievable performance.
Extending the formal guarantees for systems with process disturbances, measurement noise, nonlinearities, \rz{and non-identical initial conditions} are promising directions for future work.

\bibliographystyle{IEEEtran}
\bibliography{Sources/ref.bib}

\end{document}